	\newtheorem{claim}{Claim}
	\newtheorem{lemma}{Lemma}
\newcommand{\eq}[1]{\begin{align}#1\end{align}}
\newcommand{\seq}[1]{\begin{subequations}#1\end{subequations}}
\newcommand{\lb}[1]{\left\{ \begin{array}{ll} #1 \end{array} \right.}
\newcommand*\dif{\mathop{}\!\mathrm{d}}
\newcommand{\bm}[1]{\begin{bmatrix}#1\end{bmatrix}}
\newcommand{\E}{\mathbb{E}}
\newcommand{\defeq}{\buildrel\triangle\over =}
\newcommand{\pushright}[1]{\ifmeasuring@ #1 \else\omit\hfill$\displaystyle#1$\fi\ignorespaces}
\newcommand{\pushleft}[1]{\ifmeasuring@ #1 \else\omit$\displaystyle#1$\hfill\fi\ignorespaces}
\begin{document}
	%
	\title{Incentive design for learning in user-recommendation systems with time-varying states}
	%
	%
	%
	\author{Deepanshu~Vasal, Vijay~Subramanian and Achilleas~Anastasopoulos
	\thanks{The authors are with the Department
	of Electrical Engineering and Computer Science, University of Michigan, Ann
	Arbor, MI, 48105 USA e-mail: { \{dvasal, vgsubram, anastas\} at umich.edu}}
	}

	\maketitle
\begin{abstract}
We consider the problem of how strategic users with asymmetric information can learn an underlying time-varying state in a user-recommendation system. Users who observe private signals about the state, sequentially make a decision about buying a product whose value varies with time in an ergodic manner. We formulate the team problem as an instance of decentralized stochastic control problem and characterize its optimal policies. With strategic users, we design incentives such that users reveal their true private signals, so that the gap between the strategic and team objective is small and the overall expected incentive payments are also small.
\end{abstract}
\section{Introduction}
In a classical Bayesian learning problem, there is a \textit{single decision maker} who
makes noisy observations of the state of nature and based on these observations
eventually learns the true state. It is well known that through the likelihood
ratio test, the probability of error converges exponentially to zero as the number
of observations increases and the true state is learnt asymptotically.
With the advent of the internet, in today's world, there are many scenarios where strategic agents with different observations (i.e. information sets) interact with each other to learn the state of the system that in turn affects the spread of information in the system. One such scenario was studied by the authors in their seminal paper \cite{BiHiWe92} where they studied the occurrence of fads in a social network, which was later generalized by authors in \cite{SmSo02}. The authors in \cite{BiHiWe92} and \cite{SmSo02} study the problem of learning over a social network where observations are made sequentially by \textit{different decision makers} (users) who act \textit{ strategically} based on their own private information and actions of previous users. It is shown that herding (information cascade) can occur in such a case where a user discards its
own private information and follows the majority action of its predecessors (fads in social networks). As a result, all future users repeat this behavior and a cascade occurs. While a good cascade is desirable, there's a positive probability of a bad cascade that hurts all the users in the community. Thus from a social (i.e. team) perspective, it is highly desirable to avoid such
situations. Avoiding such bad cascades is an active area of research, for example\cite{AcDaLoOz11} and \cite{LeSuBe14} propose alternative learning models that aim at avoiding such bad cascades. In this paper, our goal is to analyze this model and design incentives to avoid bad cascades.

Most of the literature for this problem assumes time-invariant state of the nature. However, there are situations where the state of the nature, for e.g.  popularity of a product, could change over time, as a consequence of endogenous or exogenous factors (for e.g., owing to the entering of a new competitor product or improvement/drop in quality of the product). In this paper we consider a simple scenario where users want to buy a product online. The product is either good or bad (popular or unpopular) and the value of the product (state of the system) is represented by $X_t$, which is changing exogenously via a Markov chain. The state
is not directly observed by the users but each user receives a private noisy
observation of the current state.
Each user makes a decision to either buy or not buy the product, based on its private observation and action profile of all the users before its.

The strategic user wants to maximize its expected value of the product. But its
optimal action could be misaligned with the team objective of maximizing the
expected average reward of the users. Thus the question we seek to address is whether it is possible to incentivize the users to align them with the team objective. To incentivize users to contribute in the learning, we assume that users can
also send reports (at some cost) about their private observations after deciding to buy or to not
buy the product. The idea is similar to leaving a review of the product. Thus
users could be paid to report their observations to enrich the
information of the future participants. Our objective is to use principles of
mechanism design to construct the appropriate payment transfers (taxes/subsidies). Although, our approach deviates from general principles of mechanism design for solution of the game problem to \emph{exactly} coincide with the team problem. However, this analysis could provide the bounds on the gap and an acceptable practical design. 

We use uppercase letters for random variables and lowercase for their realizations. We use notation $a_{t:t'}$ to represent vector $(a_t, a_{t+1}, \ldots a_{t'})$ when $t'\geq t$ or an empty vector if $t'< t$. We denote the indicator function of any set $A$ by $I_{A}(\cdot)$.
For any finite set $\mathcal{S}$, $\mathcal{P}(\mathcal{S})$ represents space of probability measures on $\mathcal{S}$ and $|\mathcal{S}|$ represents its cardinality. We represent the set of real numbers by $\mathbb{R}$. We denote by $P^g$ (or $E^g$) the probability measure generated by (or expectation with respect to) strategy profile $g$. All equalities and inequalities involving random variables are to be interpreted in \emph{a.s.} sense. We use the terms users and buyers interchangeably.

The paper is structured as follows. In section~\ref{sec:Model}, we present the model. In section~\ref{sec:TeamProblem}, we formulate the team problem as an instance of decentralized stochastic control and characterize its optimal policies. In section~\ref{sec:IncentiveDesign}, we consider the case with strategic users and design incentives for the users to align their objective with team objective. We conclude in section~\ref{sec:Conclusion}.

\section{Model}
\label{sec:Model}
We consider a discrete-time dynamical system over infinite horizon. There is a product whose value varies over time as (a slowly  varying) discrete time Markov process $(X_t)_t$, where $X_t$ takes value in the set $\{0,1\}$; 0 represents that product was bad (has low intrinsic value) and 1 represents and product is good (has high intrinsic value).  
\seq{
\label{eq:xMarkov}
\eq{
P(x_1) &= \hat{Q}(x_1)\\
P(x_t|x_{1:t-1}) &= Q_x(x_t|x_{t-1}),
}
}
such that $Q_x(x_{t}|x_{t-1})=\epsilon$ if $x_t\neq x_{t-1}$, for $0<\epsilon<1$.

There are countably infinite number of exogenously selected, selfish buyers that act sequentially and exactly once in the process. Buyer $t$ makes a noisy observation of the value of the product at time $t$, $v_t \in \mathcal{V} \defeq \{ 0,1\}$, through a binary symmetric channel with crossover probability $p$ such that these observations are conditionally independent across users given the system state (i.e. noise is i.i.d.) i.e. $P(v_t|x_{1:t}v_{1:t-1}) = Q_v(v_t | x_t ) =p$ if $v_t\neq x_{t}$. Based on actions of previous buyers and its private observation buyer $t$ takes two actions: $a_t \in \mathcal{A} \defeq \{0,1\}$, which correspond to either buying or not buying the good, and $b_t \in \mathcal{B} \defeq \{*,1\}$ where * represents not reporting its observation and 1 represent reporting truthfully. Based on these actions and the state of the system, the buyer gets reward $R(x_t,a_t,b_t)$ where
\eq{
 &R(x_t,a_t, b_t) \nonumber \\
  &= -c \cdot I(b_t = 1) +
   \lb{\;\;\ 1/2, \;\;\; \hfill x_t = 1,a_t = 1\\
        -1/2, \;\;\; \hfill  x_t = 0, a_t = 1  \\
      \;\;\;\;0, \;\;\; \hfill a_t = 0 },
}
where $c$ is cost of reporting its observation truthfully. The actions are publicly observed by future buyers whereas the observations $(v_t)_t$ are private information of the buyers.

\section{Team problem}
\label{sec:TeamProblem}

In this section we study the team problem where the buyers are cooperative and want to maximize the expected average reward per unit time for the team. At time $t$, buyer $t$'s information consists of its private information $v_t$ and publicly available information $a_{1:t-1},b_{1:t-1}$. It takes action $a_t,b_t$ though a (deterministic) policy $g_t: \mathcal{A}^{t-1}\times \mathcal{B}^{t-1}  \times \mathcal{V} \to \mathcal{A}\times \mathcal{B} $ as 
\eq{
(a_t,b_t) = g_t(a_{1:{t-1}},b_{1:{t-1}},v_t).
}
The objective as a team (or for a social planner) is to maximize the expected average reward per unit time for all the users i.e.
\eq{
J\defeq  \sup_g \limsup_{\tau\to \infty} \frac{1}{\tau} \sum_{t=1}^{\tau} \E^g\{R(X_t,A_t, B_t)\}.
}

Since the decision makers (i.e. the buyers) have different information sets, this is an instance of a decentralized stochastic control problem. We use techniques developed in \cite{NaMaTe11} to find structural properties of the optimal policies. Specifically, we equivalently view the system through the perspective of a common agent that observes at time $t$, the common information $a_{1:t-1},b_{1:t-1}$ and takes action $\gamma_t : \mathcal{V} \to \mathcal{A}\times \mathcal{B}$, which is a partial function that, when acted upon buyer's private information $v_t$, generates its action $(a_t,b_t)$. The common agent's actions $(\gamma_t)_t$ are taken through common agent's strategy $\psi = (\psi)_t$ as $\gamma_t = \psi_t[a_{1:t-1},b_{1:{t-1}}]$ where $\psi_t: \mathcal{A}^{t-1}\times \times \mathcal{B}^{t-1} \to \left( \mathcal{V} \to \mathcal{A}\times \mathcal{B}\right)$. The corresponding common agent's problem is 
\eq{
J^c\defeq  \sup_{\psi} \limsup_{\tau\to \infty} \frac{1}{\tau} \sum_{t=1}^{\tau} \E^{\psi}\{ R(X_t,A_t, B_t)\}.
}
This procedure transforms the original decentralized stochastic control problem of buyers to a centralized stochastic control problem of the common agent. Thus an optimal policy of common agent can be translated to optimal policy for the buyers. In order to characterize common agent's optimal policies, we find an information state for the common agent's problem. We define a belief state $\pi_t$ at time $t$ as a probability measure on current state of the system given the common information i.e. $\pi_t(x_t) \defeq P^{\psi}(x_t|a_{1:t-1} b_{1:t-1} \gamma_{1:t})$. The following lemma shows that the common agent faces a Markov decision problem (MDP). 

\begin{lemma}
\label{lemma:CMP1_ch4}
$(\Pi_t, \Gamma_t)_t$ is a controlled Markov process with state $\Pi_t$ and action $\Gamma_t$ such that 
\seq{
\eq{
&P^{\psi}(\pi_{t+1} | \pi_{1:t} \gamma_{1:t}) = P(\pi_{t+1} | \pi_t \gamma_t)\label{eq:Pi_update_ch4}\\
&\E^{\psi}\{ R(X_t,A_t, B_t)|a_{1:t-1}b_{1:t-1} \gamma_{1:t} \}&  \nonumber \\
=& \E\{ R(X_t,A_t, B_t)| \pi_t\gamma_{t}\}\\
=:& \hat{R}(\pi_t, \gamma_t)
} 
}
and there exists an update function $F$, independent of $\psi$ such that $\pi_{t+1} = F (\pi_t, \gamma_t, a_t,b_t )$. 
\end{lemma}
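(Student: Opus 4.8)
The plan is to derive an explicit Bayesian update for the belief $\pi_t$, show that this update is a fixed function $F$ of $(\pi_t,\gamma_t,a_t,b_t)$ that does not depend on the strategy $\psi$, and then read off the two displayed claims from the structure of that update together with the conditional-independence structure of the model.

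First I would establish the update function $F$. Starting from the definition $\pi_{t+1}(x_{t+1}) = P^{\psi}(x_{t+1}\mid a_{1:t},b_{1:t},\gamma_{1:t+1})$ and noting that $\gamma_{t+1}$ is a deterministic function of $(a_{1:t},b_{1:t})$ that does not influence $x_{t+1}$, I would split off the newest observations $(a_t,b_t)$ from the conditioning and apply Bayes' rule. The resulting joint term factorizes using three model properties: $P(x_t\mid a_{1:t-1},b_{1:t-1},\gamma_{1:t})=\pi_t(x_t)$ by definition; the prescription makes $(a_t,b_t)=\gamma_t(v_t)$ so that $P(a_t,b_t\mid x_t,\cdot)=\sum_{v_t:\,\gamma_t(v_t)=(a_t,b_t)}Q_v(v_t\mid x_t)$ since the private signal depends only on $x_t$; and the state evolves by $Q_x(x_{t+1}\mid x_t)$ independently of the actions. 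This yields
\[
\pi_{t+1}(x_{t+1}) = \frac{\sum_{x_t}\pi_t(x_t)\,\big(\sum_{v_t:\,\gamma_t(v_t)=(a_t,b_t)}Q_v(v_t\mid x_t)\big)\,Q_x(x_{t+1}\mid x_t)}{\sum_{x_t',x_{t+1}'}\pi_t(x_t')\,\big(\sum_{v_t:\,\gamma_t(v_t)=(a_t,b_t)}Q_v(v_t\mid x_t')\big)\,Q_x(x_{t+1}'\mid x_t')},
\]
which depends on $\psi$ only through $\pi_t$ and hence defines $F(\pi_t,\gamma_t,a_t,b_t)$, proving the last assertion of the lemma.

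With $F$ in hand, the reward identity follows immediately: conditioning $R(X_t,A_t,B_t)$ on $(a_{1:t-1},b_{1:t-1},\gamma_{1:t})$ and marginalizing over $x_t$ and over the signals $v_t$ consistent with each realized $(a_t,b_t)$ uses the same factorization, giving a sum that depends only on $(\pi_t,\gamma_t)$; I would take this sum as the definition of $\hat R(\pi_t,\gamma_t)$. For the Markov property, the key observation is that each $\pi_s$ is measurable with respect to $(a_{1:s-1},b_{1:s-1},\gamma_{1:s})$, so the $\sigma$-algebra generated by $(\pi_{1:t},\gamma_{1:t})$ is coarser than that generated by $(a_{1:t-1},b_{1:t-1},\gamma_{1:t})$. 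Since $\pi_{t+1}=F(\pi_t,\gamma_t,a_t,b_t)$, the only randomness in $\pi_{t+1}$ given $(\pi_{1:t},\gamma_{1:t})$ enters through $(a_t,b_t)$, and a smoothing (tower) argument shows $P(x_t\mid \pi_{1:t},\gamma_{1:t})=\pi_t(x_t)$; consequently $P(a_t,b_t\mid \pi_{1:t},\gamma_{1:t})=\sum_{x_t}\pi_t(x_t)\sum_{v_t:\,\gamma_t(v_t)=(a_t,b_t)}Q_v(v_t\mid x_t)$ depends only on $(\pi_t,\gamma_t)$, which gives \eqref{eq:Pi_update_ch4}.

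I expect the main obstacle to be the Markov-property step rather than the algebra: one has to argue carefully that collapsing the full conditioning history $(\pi_{1:t},\gamma_{1:t})$ down to the current pair $(\pi_t,\gamma_t)$ loses nothing, which is exactly the sufficient-statistic content of the common-information approach and relies on the measurability reduction together with the fact that $F$ is strategy-independent. The remaining factorization steps are routine consequences of the i.i.d. noise, the exogenous Markov state, and the deterministic prescription map.
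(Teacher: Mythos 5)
Your proposal is correct and follows essentially the same route as the paper's appendix proof: a Bayes-rule derivation of the strategy-independent update $F$ (prediction via $Q_x$ combined with the likelihood $\sum_{v_t:\gamma_t(v_t)=(a_t,b_t)}Q_v(v_t\mid x_t)$), the same factorization for $\hat R(\pi_t,\gamma_t)$, and the Markov property obtained by summing over $(a_t,b_t)$ and collapsing $P(x_t\mid \pi_{1:t},\gamma_{1:t})$ to $\pi_t(x_t)$. If anything, you are slightly more explicit than the paper on the tower-property step justifying that collapse, which the paper uses silently.
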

\begin{proof}
See Appendinx.
\end{proof}

Lemma~\ref{lemma:CMP1_ch4} implies that for common agent's problem, it can summarize the common information $a_{1:t-1},b_{1:t-1}$ in the belief state $\pi_t$. Furthermore there exists an optimal policy for the common agent of the form $\theta_t : \mathcal{P}(\mathcal{X}) \to \left( \mathcal{V} \to \mathcal{A}\times \mathcal{B}\right)$ that can be found as solution of the following dynamic programming equation in the space of public beliefs $\pi_t$ as, $\forall \pi, \gamma^* = \theta[\pi]$ is the maximizer in the following equation  
\eq{ \rho + V(\pi) = \max_{\gamma} \;\; \hat{R}(\pi, \gamma) + \E\{V(\Pi')|\pi \gamma \} \label{eq:DP1_ch4},
}
where the distribution of $\pi'$ is given through the kernel $P(\cdot|\pi \gamma)$ in~(\ref{eq:Pi_update_ch4}) and $\rho \in \mathbb{R}, V: \mathcal{P}(\mathcal{X})\to \mathbb{R}$ are solution of the above fixed point equation. Based on this public belief $\pi_t$ and its private information $x_t$, each user $t$ takes actions as
\eq{
(a_t,b_t) = m_t(\pi_t,v_t) = \theta_t[\pi_t] (v_t) .
} 

We note that since states, actions and observations belong to a binary set, there are sixteen partial functions $\gamma$ possible that are shown in Table~\ref{tab:Gamma} below where $\gamma = \bm{\gamma(v_t=0) \\ \gamma(v_t=1)}=\bm{a_t,b_t (v_t=0) \\ a_t,b_t ( v_t=1)} $. Since the common belief is updated as $\pi_{t+1} = F(\pi_t, \gamma, \gamma(v_t))$ and $v_t$ is binary valued, there exist two types of $\gamma$ functions: learning ($\gamma^{L}$) and non-learning ($\gamma^{NL}$). $\gamma^L$ leads to update of belief through $F(\cdot)$ in (\ref{eq:Pi_update_ch4}) that is informative of the private observation $v_t$, whereas $\gamma^{NL}$ leads to uninformative update of belief. Eight of them are dominated in reward for example $v_t$ need not be reported if it is revealed through $a_t$, or if it can be revealed indirectly by absence of reporting. 

\begin{table}[htbp]
{
\begin{center}
\caption{}
\label{tab:Gamma}
\begin{tabular}{|lllllll|}
\hline
$\gamma^L$ & $\bm{0,*\\ 1,*}$ & $\bm{1,*\\ 0,*}$  & $\bm{1,1\\ 1,*}$ & $\bm{1,*\\ 1,1}$ & $ \bm{0,1\\ 0,*}$ & $\bm{0,*\\ 0,1}$   \\
 & \xcancel{$\bm{0,1\\ 1,1}$}  & \xcancel{$\bm{1,1\\ 0,1}$} & \xcancel{$\bm{0,1\\ 1,*}$} &  \xcancel{$\bm{1,1\\ 0,*}$} & \xcancel{$\bm{0,*\\ 1,1}$} &  \xcancel{$\bm{1,*\\ 0,1}$}  \\
  & \xcancel{$\bm{0,1\\ 0,1}$} & \xcancel{$\bm{1,1\\ 1,1}$} & & & & \\
\hline
$\gamma^{NL}$ & $\bm{0,*\\ 0,*}$ & $\bm{1,*\\ 1,*}$ & & & & \\
\hline
\end{tabular}
\end{center}
}
\end{table}

\section{Game problem}
\label{sec:IncentiveDesign}

We now consider the case when the buyers are strategic. As before, buyer $t$ observes public history $a_{1:t-1},b_{1:t-1}$ and its private observation $v_t$ and thus takes its actions as $(a_t,b_t) = g_t(a_{1:t-1},b_{1:t-1},v_t)$. Its objective is to maximize its expected reward
\eq{
    J_t &= \max_{g_t} \; \E^{g}\{ R(X_t,A_t, B_t)\}.
}
Since all buyers have different information, this defines a dynamic game with asymmetric information. An appropriate solution concept is Perfect Bayesian Equilibrium (PBE)~\cite{OsRu94} that requires specification of an assessment $(g_t^*,\mu_t^*)_t$ of strategy and belief profile where $g_t^*$ is the strategy of buyer $t$, $g^*_t : \mathcal{A}^{t-1} \times \mathcal{B}^{t-1}\times\mathcal{V} \to \mathcal{P}(\mathcal{A} \times\mathcal{B})$, and $\mu_t^*$ is a belief as a function of buyer $t$'s history on the random variables not observed by it till time $t$ i.e. $\mu^*_t : \mathcal{A}^{t-1} \times \mathcal{B}^{t-1}\times\mathcal{V} \to \mathcal{P}(\mathcal{X}^t \times\mathcal{V}^t)$. In general, finding a PBE is hard~\cite{OsRu94} since it involves solving a fixed point equation in strategies and beliefs that are function of histories although there are few cases where there exists an algorithm to find them~\cite{NaGuLaBa14,VaSuAn15arxiv}. For this problem, since users act exactly once in the game and are thus myopic, it can be found easily in a forward inductive way, as in~\cite{BiHiWe92,SmSo02}. Moreover, a belief on $X_t$, $\mu^*_t(x)\defeq P^{g^*}(X_t=x|a^{t-1},b^{t-1},v_t), x\in\{0,1\}$ is sufficient and any joint belief consistent with $\mu^*_t(x)$ along with equilibrium strategy profile $g^*$ constitute a PBE.
For any history, users compute a belief equilibrium strategy depending on $v_t$ and $\pi_t$ as
        \eq{ 
\gamma^*_t = \phi[\pi_t] = \arg\max_{\gamma_t} \hat{R}(\pi_t,\gamma_t) \label{eq:StOpt}
    }
With $\phi[\cdot]$ defined through (\ref{eq:StOpt}), for every history $(a_{1:t-1},b_{1:t-1},v_t)$, $\pi_t$ is updated using forward recursion through $\pi_{t+1} = F(\pi_t, \phi(\pi_t), a_t,b_t)$ and equilibrium strategies are generated as $g_t^*(a_{1:t-1},b_{1:t-1},v_t)=\phi[\pi_t](v_t)$. Finally the beliefs $\mu_t^*$ can be easily derived from $\pi_t$ and private information $v_t$ through Bayes rule.
 
 We numerically solve (\ref{eq:DP1_ch4}) using value iteration to find team optimal policy, shown in Figure~\ref{fig:Dec_ch4}, for parameters $p=0.2,\epsilon =0.001$ and $c=0.05$. For the same parameters, Figure~\ref{fig:St_ch4} shows optimal policy for a strategic user that solves (\ref{eq:StOpt}). 

\begin{figure}[htbp]
\centering
\captionsetup{justification=centering}
\includegraphics[width=3.7in]{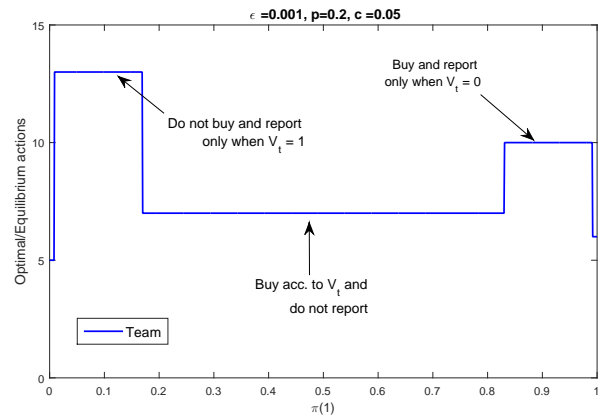}
\caption{Decentralized team optimal policy}
\label{fig:Dec_ch4}
\end{figure}
\begin{figure}[htbp]
\centering
\captionsetup{justification=centering}
\includegraphics[width=3.7in]{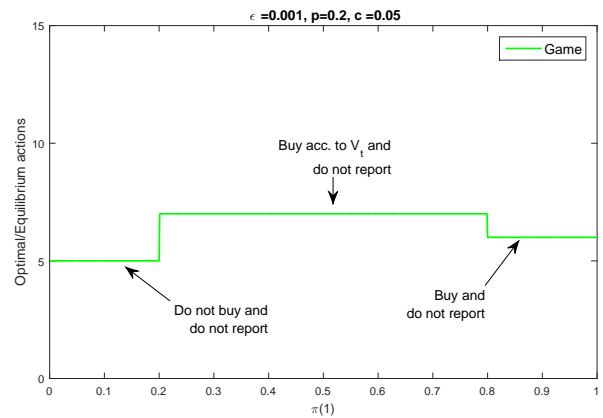}
\caption{Strategic optimal policy}
\label{fig:St_ch4}
\end{figure}

\subsection{Incentive design for strategic users}

Our goal is to align each buyers' objective with the team objective. In order to do so, we introduce incentives (tax or subsidy) for user $t$, $t:\mathcal{P}(\mathcal{X}) \times \mathcal{A} \times \mathcal{B} \to \mathbb{R} $ such that its effective reward is given by $\hat{R}(\pi_t,\gamma_t) - t(\pi_t, a_t, b_t)$. 

We first note that a user can not internalize social reward through incentives as is done in a pivot mechanism~\cite{Vi61,Cl71,Gr73,BeVa10}, i.e. there does not exist an incentive mechanism such that the following equation could be true
\eq{
 \hat{R}(\pi,\gamma) - t(\pi, a, b)&= \hat{R}(\pi, \gamma) + \E\{V(\Pi')|\pi \gamma \} \label{eq:EqObj} \\
\text{i.e. \hspace{30pt}}  t(\pi, a, b) &= - \E\{V(\Pi')|\pi \gamma \}
}
for $V(\cdot)$ defined in (\ref{eq:DP1_ch4}) and the distribution of $\pi'$ is given through the kernel $P(\cdot|\pi \gamma)$ in~(\ref{eq:Pi_update_ch4}). The left side of (\ref{eq:EqObj}) is buyers' effective reward and right side is the objective of the team problem as in (\ref{eq:DP1_ch4}). Such a design is not feasible because while $t(\cdot)$ can depend only on public observations ($\pi,a,b$), the second term in the RHS of (\ref{eq:EqObj}) depends on $\gamma$ as well which is not observed by the designer.

We observe in Figures~\ref{fig:Dec_ch4},~\ref{fig:St_ch4} that team optimal policy coincides with the strategic optimal policy for a significant range of $\pi(1)$. Let $\mathcal{S}$ be the set consisting of $\pi(1)$ where the team optimal policy coincides with the strategic optimal policy and $\mathcal{S}^c$ be the complement set. In order to align the two policies, we consider the following incentive design such that a user is paid $c$ units by the system planner whenever the public belief $\pi(1)$ belongs to the set $\mathcal{S}^c$ and user reports its observation, 
\eq{
t(\pi, a_t, b_t) = -c\cdot I(\pi(1) \in \mathcal{S})I(b_t = 1).
}
These payments are made after any report for enforcement purposes. This is agreed upon, i.e., system planner commits to this.
With these incentives, the optimal policy of the strategic user is shown in Figure~\ref{fig:Mch_ch4}. Figure~\ref{fig:Cost_ch4} compares the time average reward achieved through these policies, found through numerical results. This shows that the gap between the team objective and the one with incentives is small. Intuitively, this occurs because the buyers learn the true state of the system relatively quickly (exponentially fast) compared to the expected time spent by the Markov process $X_t$ in any state. Equivalently, the time spent by the process $(\Pi_t(1))_t$ in the set $\mathcal{S}^c$ is small. Yet it is crucial for the social objective that learning occurs in this region.
Also in Figure~\ref{fig:Cost_ch4}, the gap between the mechanism (including incentives) and the mechanism where incentives are subtracted signifies the expected average payment made by the designer, which is relatively small.
\begin{figure}[htbp]
\centering
\captionsetup{justification=centering}
\includegraphics[width=3.7in]{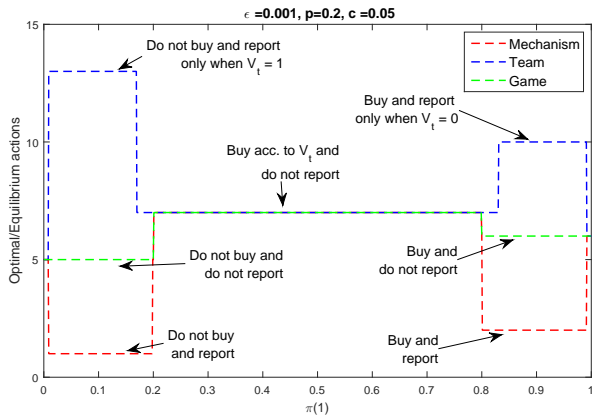}
\caption{Strategic optimal policy with incentives}
\label{fig:Mch_ch4}
\end{figure}
\begin{figure}[htbp]
\centering
\captionsetup{justification=centering}
\includegraphics[width=3.7in]{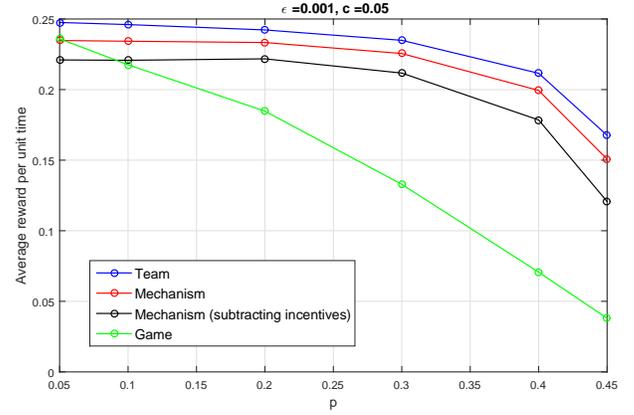}
\caption{Expected time average cost comparison for different policies }
\label{fig:Cost_ch4}
\end{figure}

\section{Conclusion}
\label{sec:Conclusion}

We considered a sequential buyers game where a countable number of strategic buyers buy a product exactly once in the game. The value of the product is modeled as a Markov process and buyers privately make noisy observation of the value. We model the team problem as an instance of decentralized stochastic control problem and characterize structure of optimum policies. When users are strategic, it is modeled as a dynamic game with asymmetric information. We show that for some set $\pi_t \in \mathcal{S}$ that occurs with high probability, the strategic optimal policy coincides with the team optimal policy. Thus only outside this set, i.e., when
$\pi_t \in \mathcal{S}^c$, buyers need to be incentivized to report their observations so that higher average rewards can be achieved for the whole team. Since numerically $\mathcal{S}^c$ occurs with low probability, the expected incentive payments are low. However, even though infrequent, these incentives help in the learning for the team as a whole, specifically for the future users. This suggests that using such a mechanism for the more general case could be a useful way to bridge the gap between strategic and team objectives.

Future work involves characterizing team-optimum policies analytically and studying the resulting social utility through approximations or bounds on the induced Markov chain statistics. This would also characterize the gain from introducing ``structured'' incentives. Finally, incentives designs could be studied that minimize total expected incentives and guarantee voluntary participation.


\appendix
\section{}
\label{app:A}
\begin{claim}
There exists an update function $F$, independent of $\psi$ such that $\pi_{t+1} = F (\pi_t, \gamma_t, a_t,b_t )$.
\end{claim}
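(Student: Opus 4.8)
The plan is to derive $F$ explicitly via the standard two-step recursion of nonlinear filtering --- a measurement update that incorporates the newly observed action $(a_t,b_t)$, followed by a time propagation through the Markov kernel $Q_x$ --- while carefully tracking every place where the strategy $\psi$ could enter and verifying that it drops out. The belief $\pi_t(x_t)=P^{\psi}(x_t\mid a_{1:t-1},b_{1:t-1},\gamma_{1:t})$ plays the role of the prior for state $X_t$; the goal is to produce $\pi_{t+1}(x_{t+1})=P^{\psi}(x_{t+1}\mid a_{1:t},b_{1:t},\gamma_{1:t+1})$ as a function of $(\pi_t,\gamma_t,a_t,b_t)$ alone.

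First I would perform the measurement update. Since $(a_t,b_t)=\gamma_t(v_t)$ is a deterministic function of the private signal $v_t$, and since conditioned on $x_t$ the signal is independent of all past variables (the observation noise is i.i.d.\ given the state, i.e.\ $P(v_t\mid x_{1:t},v_{1:t-1})=Q_v(v_t\mid x_t)$), the likelihood of the observed action given the state is
\eq{
\ell(a_t,b_t\mid x_t,\gamma_t) = \sum_{v:\,\gamma_t(v)=(a_t,b_t)} Q_v(v\mid x_t),
}
which depends only on $\gamma_t$ and the known channel, not on $\psi$. Bayes' rule then gives the filtered belief
\eq{
\hat{\pi}_t(x_t) = \frac{\pi_t(x_t)\,\ell(a_t,b_t\mid x_t,\gamma_t)}{\sum_{x'}\pi_t(x')\,\ell(a_t,b_t\mid x',\gamma_t)},
}
again a function of $(\pi_t,\gamma_t,a_t,b_t)$ only.

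Next I would carry out the prediction step. The key observation is that $\gamma_{t+1}=\psi_{t+1}[a_{1:t},b_{1:t}]$ is a \emph{deterministic} function of the common information $a_{1:t},b_{1:t}$, so conditioning additionally on $\gamma_{1:t+1}$ adds no information beyond $(a_{1:t},b_{1:t},\gamma_{1:t})$; hence $\pi_{t+1}(x_{t+1})=P^{\psi}(x_{t+1}\mid a_{1:t},b_{1:t},\gamma_{1:t})$. Propagating $\hat{\pi}_t$ through the state transition and invoking the Markov property of $(X_t)_t$ yields
\eq{
\pi_{t+1}(x_{t+1}) = \sum_{x_t} Q_x(x_{t+1}\mid x_t)\,\hat{\pi}_t(x_t) =: F(\pi_t,\gamma_t,a_t,b_t),
}
which is manifestly independent of $\psi$, as required.

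The hard part will be the careful conditional-independence bookkeeping that justifies replacing $P^{\psi}(v_t\mid x_t,\text{past})$ by $Q_v(v_t\mid x_t)$ inside the likelihood and $P^{\psi}(x_{t+1}\mid x_t,\text{past})$ by $Q_x(x_{t+1}\mid x_t)$ in the propagation --- both of which rely on the product structure of the model --- together with the argument that the deterministically generated $\gamma$'s carry no extra conditioning information. One remaining edge case is a degenerate normalizer, which arises only when $(a_t,b_t)$ lies outside the range of $\gamma_t$ on the support of $\pi_t$; since such an observation has zero probability along the realized path, $F$ may be fixed by an arbitrary convention there without affecting any \emph{a.s.}\ statement.
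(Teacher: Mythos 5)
Your proposal is correct and follows essentially the same route as the paper's proof: a Bayes measurement update with the action likelihood $\sum_{v:\,\gamma_t(v)=(a_t,b_t)} Q_v(v\mid x_t)$ (which the paper writes as $\sum_{v_t} I_{\{\gamma_t(v_t)\}}(a_t,b_t)Q_v(v_t\mid x_t)$), followed by propagation through $Q_x$, with the observation that $\gamma$'s are deterministic functions of the common information so conditioning on them adds nothing. Your explicit treatment of the degenerate-normalizer (zero-probability action) case is a small refinement the paper omits, but it does not change the argument.
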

\begin{proof}
	 Fix $\psi$
	 \begin{subequations}
	\eq{
	 \pi_{t+1}(x_{t+1}) =& P^{\psi}(x_{t+1} |a_{1:t}b_{1:t}  \gamma_{1:t})\\
	=& \sum_{x_t} P^{\psi}( x_{t+1},x_t |a_{1:t}b_{1:t}  \gamma_{1:t})\\
	=& \sum_{x_t}P^{\psi}(x_t |a_{1:t}b_{1:t}  \gamma_{1:t})\hat{Q}(x_{t+1}| x_t) 
	}
\end{subequations}
	Now,
\begin{subequations}
	\eq{
	&P^{\psi}(x_t |a_{1:t}b_{1:t}  \gamma_{1:t}) \nonumber \\
	&= \frac{P^{\psi}(x_t, a_t,b_t |a_{1:t-1}b_{1:t-1}, \gamma_{1:t})}{\sum_{\hat{x}_t} P^{\psi}(\hat{x}_t, a_t,b_t  |a_{1:t-1}b_{1:t-1}, \gamma_{1:t})}\\
	&=P^{\psi}(x_t |a_{1:t-1}b_{1:t-1}, \gamma_{1:t}) \times \nonumber \\
	& \;\;\;\;\;\;\;\;\frac{ \sum_{v_t} P^{\psi}( a_t,b_t v_t|a_{1:t-1}b_{1:t-1}, \gamma_{1:t}, x_t)}{\sum_{\hat{x}_t} P(\hat{x}_t, a_t,b_t  |a_{1:t-1}b_{1:t-1}, \gamma_{1:t})}\\
	&= \frac{{P^{\psi}(x_t |a_{1:t-1}b_{1:t-1}, \gamma_{1:t-1})}{ \sum_{v_t} I_{\{\gamma_{t}(v_t)\}}(a_t,b_t) Q_v(v_t|x_t)}}{\splitfrac{\sum_{\hat{x}_t} P^{\psi}(\hat{x}_t |a_{1:t-1}b_{1:t-1}, \gamma_{1:t-1})}{\sum_{v_t} I_{\{\gamma_{t}(v_t)\}}(a_t,b_t)Q_v(v_t|\hat{x}_t)} }\label{eq:appH1}
	}
\end{subequations}
	where first part in numerator in (\ref{eq:appH1}) is true since given policy $\psi$, $\gamma_{t}$ can be computed as  $\gamma_t = \psi_t(a_{1:t-1}b_{1:t-1})$.
	
We conclude that
	\eq{
	&P(x_t |a_{1:t}, \gamma_{1:t}) \nonumber \\
	=& \frac{\pi_t(x_t)\sum_{v_t} I_{\{\gamma_{t}(v_t)\}}(a_t,b_t) Q_v(v_t|x_t)}{\sum_{\hat{x}_t} \pi_t(\hat{x}_t) \sum_{v_t} I_{\{\gamma_{t}(v_t)\}}(a_t,b_t) Q_v(v_t|\hat{x}_t)},
	}
	thus,
	\eq{
	\pi_{t+1}  = F(\pi_t, \gamma_t, a_t,b_t)
	}
	where $F$ is independent of policy $\psi$.
		\end{proof}

\begin{claim}
$(\Pi_t, \Gamma_t)_t$ is a controlled Markov process with state $\Pi_t$ and action $\Gamma_t$ such that 
\eq{
P^{\psi}(\pi_{t+1} | \pi_{1:t} \gamma_{1:t}) &= P(\pi_{t+1} | \pi_t \gamma_t)
}
\eq{
&\E^{\psi}\{ R(X_t,A_t,B_t)| \gamma_{1:t} a_{1:t-1}b_{1:t-1}\} \nonumber \\
&= \E\{ R(X_t,A_t,B_t)|\gamma_{t} \pi_t\}\\
&=: \hat{R}(\pi_t, \gamma_t)
} 
\end{claim}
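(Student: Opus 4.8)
The plan is to establish the two assertions separately, in each case collapsing a conditional expectation taken over the entire history into an expression that involves only the pair $(\pi_t,\gamma_t)$, and for the transition kernel additionally reusing the update map $F$ from the preceding claim. The single fact driving the whole argument is the defining property of the belief, $\pi_t(x_t)=P^{\psi}(x_t\,|\,a_{1:t-1}b_{1:t-1}\gamma_{1:t})$, combined with the observation model $P(v_t|x_t,\cdot)=Q_v(v_t|x_t)$ and the fact that the common agent's action is the deterministic evaluation $(a_t,b_t)=\gamma_t(v_t)$.

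For the reward identity I would first condition on the private pair $(x_t,v_t)$ and write
\eq{
&\E^{\psi}\{R(X_t,A_t,B_t)\,|\,\gamma_{1:t}a_{1:t-1}b_{1:t-1}\} \nonumber \\
&= \sum_{x_t,v_t} R(x_t,\gamma_t(v_t))\,Q_v(v_t|x_t)\,\pi_t(x_t),
}
where I use $(a_t,b_t)=\gamma_t(v_t)$ to eliminate the action, the conditional independence of $v_t$ given $x_t$ to replace its law by $Q_v$, and the belief definition to replace $P^{\psi}(x_t|a_{1:t-1}b_{1:t-1}\gamma_{1:t})$ by $\pi_t(x_t)$. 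Since the right-hand side depends on the history only through $(\pi_t,\gamma_t)$, I would define it to be $\hat{R}(\pi_t,\gamma_t)$, which is exactly the second claimed equality.

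For the transition kernel I would start from $\pi_{t+1}=F(\pi_t,\gamma_t,a_t,b_t)$ supplied by the previous claim, so that
\eq{
&P^{\psi}(\pi_{t+1}\,|\,\pi_{1:t}\gamma_{1:t}) \nonumber \\
&= \sum_{a_t,b_t} I_{\{F(\pi_t,\gamma_t,a_t,b_t)\}}(\pi_{t+1})\,P^{\psi}(a_t,b_t\,|\,\pi_{1:t}\gamma_{1:t}),
}
and everything reduces to showing the action law depends on the history only through $(\pi_t,\gamma_t)$. Conditioning once more on $(x_t,v_t)$ gives $P^{\psi}(a_t,b_t|\pi_{1:t}\gamma_{1:t})=\sum_{x_t,v_t} I_{\{\gamma_t(v_t)\}}(a_t,b_t)\,Q_v(v_t|x_t)\,P^{\psi}(x_t|\pi_{1:t}\gamma_{1:t})$. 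The delicate point, and the step I expect to be the main obstacle, is justifying $P^{\psi}(x_t|\pi_{1:t}\gamma_{1:t})=\pi_t(x_t)$ even though the conditioning is now on the coarser sequence of past beliefs rather than on the raw public history used to \emph{define} $\pi_t$. I would resolve this by observing that each $\pi_s$ is a deterministic function of $(a_{1:s-1},b_{1:s-1},\gamma_{1:s})$, so the $\sigma$-algebra generated by $(\pi_{1:t},\gamma_{1:t})$ is contained in that generated by $(a_{1:t-1},b_{1:t-1},\gamma_{1:t})$; applying the tower property to $\pi_t(x_t)=P^{\psi}(x_t|a_{1:t-1}b_{1:t-1}\gamma_{1:t})$ and using that $\pi_t$ is measurable with respect to the coarser $\sigma$-algebra collapses the smoothing to $\pi_t(x_t)$ itself. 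Substituting back shows the action law, and therefore the entire kernel, is a $\psi$-independent function of $(\pi_t,\gamma_t)$, which is precisely the claimed $P(\pi_{t+1}|\pi_t\gamma_t)$.
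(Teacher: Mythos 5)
Your proposal is correct and takes essentially the same route as the paper's proof: both establish the reward identity by summing over $(x_t,v_t)$ (the paper equivalently sums over $(a_t,b_t)$ with indicators $I_{\{\gamma_t(v_t)\}}(a_t,b_t)$), and both establish the kernel property by decomposing over the action pair, inserting the indicator of $F(\pi_t,\gamma_t,a_t,b_t)$ from the preceding claim, and collapsing the conditional law of $X_t$ to $\pi_t(x_t)$. The only difference is that you explicitly justify the smoothing step $P^{\psi}(x_t\,|\,\pi_{1:t}\gamma_{1:t})=\pi_t(x_t)$ via measurability of $\pi_{1:t}$ with respect to the public history and the tower property, whereas the paper asserts this substitution without comment.
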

\begin{proof}
	\begin{subequations}
	\eq{
	&P^{\psi} (\pi_{t+1}|\pi_{1:t}, \gamma_{1:t}) \nonumber \\
	&= \sum_{a_t,b_t} P^{\psi} (\pi_{t+1}, a_t,b_t|\pi_{1:t}, \gamma_{1:t} )\\
	&= \sum_{a_t,b_t} \textbf{1}_{\{ F(\pi_t,\gamma_t, a_t,b_t)\}}(\pi_{t+1}) \sum_{v_t} P^{\psi} (a_t,b_t v_t|\pi_{1:t}, \gamma_{1:t} )\\
	&= \sum_{a_t,b_t, x_t} \textbf{1}_{\{ F(\pi_t,\gamma_t, a_t,b_t)\}}(\pi_{t+1})P^{\psi} (x_t|\pi_{1:t}, \gamma_{1:t})  \nonumber \\
	& \;\;\;\;\;\;\;\;\; \sum_{v_t} I_{\{\gamma_{t}(v_t)\}}(a_t,b_t) Q_v(v_t|x_t) \\
	&= \sum_{a_t,b_t, x_t} \pi_t(x_t) \textbf{1}_{\{ F(\pi_t,\gamma_t, a_t,b_t)\}}(\pi_{t+1}) \nonumber \\
	& \;\;\;\;\;\;\;\;\;   \sum_{v_t} I_{\{\gamma_{t}(v_t)\}}(a_t,b_t) Q_v(v_t|x_t)  \\
	&= P (\pi_{t+1}|\pi_{t}, \gamma_{t})
	}
	\end{subequations}
	\begin{subequations}
	\eq{
	&\mathbb{E} ( R(X_t, A_t,B_t)| \pi_{1:t},\gamma_{1:t}) \nonumber \\
	&= \sum_{x_t, a_t,b_t v_t} R(x_t, a_t,b_t) P(x_t,a_t,b_t,v_t| \pi_{1:t},\gamma_{1:t})\\
	&= \sum_{x_t, a_t,b_t} R(x_t, a_t,b_t) P(x_t| \pi_{1:t},\gamma_{1:t})\nonumber \\
	&\;\;\;\;\;\;\;\;\;\; \sum_{v_t} I_{\{\gamma_{t}(v_t)\}}(a_t,b_t) Q_v(v_t|x_t) \\
	&= \sum_{x_t, a_t,b_t} R(x_t, a_t,b_t) \pi_t(x_t)\nonumber \\
	&\;\;\;\;\;\;\;\;\;\; \sum_{v_t} I_{\{\gamma_{t}(v_t)\}}(a_t,b_t) Q_v(v_t|x_t) \\
	&= \hat{R} (\pi_t, \gamma_t)
	}
	\end{subequations}
	\end{proof}

\bibliographystyle{IEEEtran}

\vfill
	
        \end{document}